\begin{document}
\title{Data-graph repairs: the \textit{preferred} approach
}
%
%
\author{Abriola Sergio\inst{1,2} \and 
Cifuentes Santiago\inst{1,2} \and
Pardal Nina\inst{2,3} \and
Pin Edwin\inst{2}}
\authorrunning{S. Abriola et al.}
%
\institute{Departamento de Computación, Facultad de Ciencias Exactas, Universidad de Buenos Aires, Argentina \and ICC CONICET, Buenos Aires, Argentina \and Department of Computer Science, University of Sheffield, UK}
%
\maketitle              
\begin{abstract}
Repairing inconsistent knowledge bases is a task that has been assessed, with great advances over several decades, from within the knowledge representation and reasoning and the database theory communities. As information becomes more complex and interconnected, new types of repositories, representation languages and semantics are developed in order to be able to query and reason about it. Graph databases provide an effective way to represent relationships among data, and allow processing and querying these connections efficiently. In this work, we focus on the problem of computing preferred (subset and superset) repairs for graph databases with data values, using a notion of consistency based on a set of Reg-GXPath expressions as integrity constraints. Specifically, we study the problem of computing preferred repairs based on two different preference criteria, one based on weights and the other based on multisets, showing that in most cases it is possible to retain the same computational complexity as in the case where no preference criterion is available for exploitation.
\keywords{Data-graphs \and Repairs \and Preferences}
\end{abstract}

\section{Introduction}
Graph databases are useful in many modern applications where the topology of the data is as important as the data itself, such as social networks analysis \cite{fan2012graph}, data provenance \cite{anand2010techniques}, and the Semantic Web \cite{arenas2011querying}. The structure of the database is commonly queried through navigational languages such as \textit{regular path queries} or RPQs \cite{barcelo2013querying} that can capture pair of nodes connected by some specific kind of path. This query languages can be extended to add more expressiveness, while usually adding extra complexity in the evaluation as well. For example, C2RPQs  are a natural extension of RPQs defined by adding to the language the capability of traversing edges backwards and closing the expressions under conjunction (similar to relational CQs). 

RPQs and its most common extensions (C2RPQs and NREs \cite{barcelo2012relative}) can only act upon the edges of the graph, leaving behind any possible interaction with data values in the nodes. This led to the design of query languages for \emph{data-graphs} (i.e. graph databases where data lies both in the paths and in the nodes themselves), such as REMs and \Gregxpath \cite{libkin2016querying}.  

As in the relational case, it is common to expect that the data preserves some semantic structure related to the world it represents. These \textit{integrity constraints} can be expressed in graph databases through \textit{path constraints}~\cite{abiteboul1999regular,buneman2000path}.

When a database does not satisfy its integrity constraints, a possible approach is to search for a `similar' database that does satisfy the constraints. In the literature, this new database is called a \emph{repair}~\cite{arenas1999consistent}, and in order to define it properly one has to precisely define the meaning of `similar'.

In the literature one can find different notions of repairs, among others, set-based repairs~\cite{tenCate:2012}, attribute-based repairs~\cite{Wijsen:2003}, and cardinality based repairs~\cite{Lopatenko07complexityof}.
When considering set-based repairs $G'$ of a graph database $G$ under a set of \Gregxpath expressions $R$, two natural restrictions of the problem are when $G'$ is a sub-graph of $G$ and when $G'$ is a super-graph of $G$. These kind of repairs are usually called \textit{subset} and \textit{superset} repairs respectively~\cite{tenCate:2012,barcelo2017data}. Since repairs may not be unique, it is possible to impose an ordering over the set of repairs and look for an `optimum' repair over such ordering.
There is a significant body of work on preferred repairs
for relational databases~\cite{flesca2007preferred,staworko2012prioritized} and other types of logic-based formalisms~\cite{brewka89,bienvenu2014querying}. However, to the best of our knowledge, there is no such work focused on graph databases or data-graphs.
In this work, we study the problem of finding a preferred repair based on two preference criteria that we propose.

This work is organized as follows. In Section~\ref{Section:Definitions} we introduce the necessary preliminaries and notation for the syntax and semantics for our data-graph model as well as the definitions of consistency and different types of repairs.
In Section~\ref{Section:PrefRepairs} we develop two different proposals to assign preferences to repairs. The first one is based on the assignment of weights, and the second one is based on lifting an ordering over edges and data to multiset orderings. For both proposals we study the computational complexity of the problem of computing a preferred repair. Conclusions and future work directions are discussed in Section~\ref{Section:Conclusions}.

\section{Definitions}\label{Section:Definitions}
\vspace{-2mm}
Fix a finite set of edge labels $\Sigma_e$ and a countable (either finite or infinite enumerable) set of data values $\Sigma_n$ (sometimes called data labels), which we assume non-empty and with $\Sigma_e \cap \Sigma_n = \emptyset$. A \defstyle{data-graph}~$G$ is a tuple $(V,L_e,\dataFunction)$ where $V$ is a set of nodes, $L_e$ is a mapping from $V \times V$ to $\parts{\Sigma_e}$ defining the edges of the graph, and $\dataFunction$ is a mapping from $V$ to the set of data values $\Sigma_n$. 

\Gregxpath \textit{expressions} are given by the following mutual recursion: 
\begin{align*}
    \aFormula, \aFormulab &:= \esDatoIgual{\aData} \mid \esDatoDistinto{\aData} \mid \neg \aFormula \mid \aFormula \vee \aFormulab \mid \aFormula \wedge \aFormulab \mid \comparacionCaminos{\aPath} \mid  \comparacionCaminos{\aPath = \aPathb} \mid \comparacionCaminos{\aPath \neq \aPathb}  \\    
    \aPath, \aPathb &:= \epsilon \mid \labelComodin \mid \aLabel \mid \aLabel^{-} \mid \aPath \circ \aPathb \mid \aPath \pathUnion \aPathb \mid \aPath \pathIntersection \aPathb \mid \aPath^{*} \mid \pathComplement{\aPath} \mid \expNodoEnCamino{\aFormula} \mid    \aPath^{n,m}    
\end{align*}
where $\aData$ iterates over $\Sigma_n$ and $\aLabel$ iterates over  $\Sigma_e$. Formulas like $\aFormula$ are called \textit{node expressions} and formulas like $\aPath$ are called \textit{path expressions}. 
The subset of \Gregxpath called \Gcorexpath is obtained by allowing the Kleene star to be applied only to labels and their inverses (i.e. $\aLabel^-$).
The semantics of these languages are defined in \cite{libkin2016querying} in a similar fashion as the usual regular languages for navigating graphs \cite{barcelo2013querying}, also adding some extra capabilities such as the complement of a path expression $\pathComplement{\aPath}$ and data tests. The $\comparacionCaminos{\aPath}$ operator is the usual one for \textit{nested regular expressions} (or NREs) used in \cite{barcelo2012relative}. Given a data-graph $\aGraph=(V,L,D)$, the semantics are:
\begin{alignat*}{2}
&\semantics{\epsilon}_\aGraph = \{ (v,v) \mid v\in V \} &&\hspace{-5.8cm} \semantics{\labelComodin}_\aGraph = \{(v,w) \mid L(v,w) \neq \emptyset\} \\ 
&\semantics{\aLabel}_\aGraph = \{ (v,w) \mid \aLabel \in L(v,w)\}   
&&\hspace{-5.8cm} \semantics{\aLabel^-}_\aGraph = \{ (w,v) \mid  \aLabel \in L(v,w)\}  \\
&\semantics{\aPath^*}_\aGraph = \text{the reflexive transitive closure of } \semantics{\aPath}_\aGraph \\ 
&\semantics{\aPath \star \aPathb}_\aGraph = \semantics{\aPath}_\aGraph \star \semantics{\aPathb}_\aGraph \text{ for } \star \in \{\circ,\cup,\cap\} \\
&\semantics{\pathComplement{\aPath}}_\aGraph = V \times V \setminus \semantics{\aPath}_\aGraph 
&&\hspace{-5.8cm} \semantics{\expNodoEnCamino{\aFormula}}_\aGraph = \{(v,v) \mid v \in \semantics{\aFormula}_\aGraph\} \\
&\semantics{\esDatoIgual{c}}_\aGraph = \{v \in V \mid D(v) = \aData\}  
&&\hspace{-5.8cm}\semantics{\esDatoDistinto{c}}_\aGraph = \{v \in V \mid D(v) \neq \aData\} \\ 
&\semantics{\aFormula \wedge \aFormulab}_\aGraph = \semantics{\aFormula}_\aGraph \cap \semantics{\aFormulab}_\aGraph   
&&\hspace{-5.8cm} \semantics{\aFormula \vee \aFormulab}_\aGraph = \semantics{\aFormula}_\aGraph \cup \semantics{\aFormulab}_\aGraph \\
&\semantics{\lnot \aFormula}_\aGraph = V \setminus \semantics{\aFormula}_\aGraph    
&&\hspace{-5.8cm} 
\semantics{\comparacionCaminos{\aPath}}_\aGraph = \{v \mid \exists w\in V,\, (v,w) \in \semantics{\aPath}_\aGraph \} \\
&\semantics{\comparacionCaminos{\aPath = \aPathb}}_\aGraph = \{v \mid \exists u, w,\, (v,u) \in \semantics{\aPath}_\aGraph,\, (v,w) \in \semantics{\aPathb}_\aGraph,\, D(u) = D(w)\}\\
&\semantics{\comparacionCaminos{\aPath \neq \aPathb}}_\aGraph = \{v \mid \exists u, w,\, (v,u) \in \semantics{\aPath}_\aGraph,\, (v,w) \in \semantics{\aPathb}_\aGraph,\, D(u) \neq D(w)\} 
\end{alignat*}

We use $\aPath \entoncesCamino \aPathb$ to denote the path expression $\aPathb \pathUnion \pathComplement{\aPath}$, and $\aNodeExpression \entoncesNodo \aNodeExpressionb$ to denote the node expression $\aNodeExpressionb \lor \neg \aNodeExpression$. We also note a label $\aLabel$ as $\down_\aLabel$ in order to easily distinguish the `path' fragment of the expressions. For example, the expression \aLabel [\esDatoIgual{c}]\aLabel\, will be noted as $\down_\aLabel [\esDatoIgual{c}] \down_\aLabel$.
Naturally, the expression $\aPath \cap \aPathb$ can be rewritten as $\overline{\overline{\aPath} \cup \overline{\aPathb}}$ while preserving the semantics. Something similar happens with the operators $\wedge$ and $\vee$ for the case of node expressions using the $\lnot$ operator. We define all these operators in this grammar since further on we will be interested in a fragment of $\Gregxpath$ called $\Gposregxpath$, which has the same grammar except for the $\overline{\aPath}$ and $\lnot \aFormula$ productions. Thus, in \Gposregxpath we will not be able to `simulate' the $\cap$ operator unless it is present in the original \Gregxpath grammar.

We will also denote by $\Gposregxpathnode$ the subset of $\Gposregxpath$ that only contains node expressions.

\vspace{-2mm}
\paragraph{Consistency}
Given a specific database, we want node or path expressions to represent some structural property we expect to find in our data. This kind of \Gcorexpath or \Gregxpath expression works as an \textit{integrity constraint} by defining semantic relations among our data. 
Formally, we define the notion of consistency in the following way:

\begin{definition}[Consistency]
Let $\aGraph$ be a data-graph and $\aRestrictionSet = \aRestrictionSetPaths \union \aRestrictionSetNodes$ a set of restrictions, where $\aRestrictionSetPaths$ consists of path expressions and $\aRestrictionSetNodes$ of node expressions. 
We say that $\aGraph$ is \defstyle{consistent} w.r.t.\ $\aRestrictionSet$, denoted by $\aGraph \models \aRestrictionSet$, if the following conditions hold: \emph{(a)} for all $\aNodeExpression \in \aRestrictionSetNodes$, we have that $\semantics{\aNodeExpression} = V_\aGraph$,
\emph{(b)} for all $\aPath \in \aRestrictionSetPaths$, we have that $\semantics{\aPath} = V_\aGraph \times V_\aGraph$. 
Otherwise, we say that $\aGraph$ is \defstyle{inconsistent} w.r.t.\ $\aRestrictionSet$. 
\end{definition}

In the rest of the paper, we will simply say that $\aGraph$ is (in)consistent whenever the restriction set $\aRestrictionSet$ is clear from the context. 

\begin{example}

 Consider the film database from Figure \ref{figure:movieDB}, where we have nodes representing people from the film industry (such as actors or directors) and others representing movies or documentaries. 

 \begin{figure}[h]

\centering

\begin{tikzpicture}[node distance={25mm}, thick, main/.style = {draw, rectangle}, scale=0.65, every node/.append style={transform shape}]

\node[main] (Hoffman) {Hoffman};
\node[main] (Actor) [left of=Hoffman] {Actor};
\node[main] (Phoenix) [below left of=Actor] {Joaquin Phoenix};
\node[main] (Robbie) [below right of=Hoffman] {Margot Robbie};
\node[main] (Babylon) [above right of=Robbie] {Babylon};
\node[main] (The Master) [above of=Hoffman] {The Master};
\node[main] (film) [right of=Babylon] {Film};
\node[main] (Anderson) [above of=film] {Anderson};
\node[main] (Chazelle) [right of=Anderson] {Chazelle};

\draw[->] (Hoffman) -- (Actor) node[midway, above=0.2pt, sloped]{type};

\draw[->] (Phoenix) -- (Actor) node[midway, above=0.1pt, sloped] {type};

\draw[->] (Robbie) -- (Actor) node[midway, above=0.2pt, sloped] {type};


\draw[->, bend left=50] (Phoenix) edge  node[midway, above=0.1pt, sloped] {acts\_in} (The Master) ;

\draw[->] (Hoffman) -- (The Master) node[midway, sloped, above=0.1pt]{acts\_in};

\draw[->] (The Master) -- (Anderson) node[midway, above=0.1pt] {directed\_by};

\draw[->] (Babylon) -- (Chazelle) node[midway, above=0.1pt, sloped] {directed\_by};

\draw[->] (Robbie) -- (Babylon) node[midway, above=0.1pt, sloped] {acts\_in};

\draw[->] (Babylon) -- (film) node[midway, below=0.1pt] {type};

\draw[->, bend right=10] (The Master) edge node[midway, above=0.1pt, sloped] {type} (film) ;

    
\end{tikzpicture} 
\caption{A film data-graph.}
\label{figure:movieDB}
\end{figure}
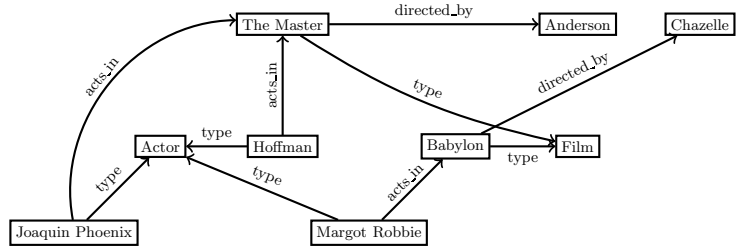
 
 If we want to make a cut from that graph that preserves only actors who have worked with Philip Seymour Hoffman through a film by Paul Thomas Anderson, then we want the following formula to be satisfied:
\begin{equation*}
     \aNodeExpression = \comparacionCaminos{\down_\esLabel{type}[actor^=]}   \entoncesNodo \comparacionCaminos{\down_\esLabel{acts\_in}\comparacionCaminos{\down_\esLabel{directed\_by} [\textit{Anderson}^=]} \down_\esLabel{acts\_in}^- [\textit{Hoffman}^=]}.
\end{equation*}

Notice that $\aNodeExpression$ is not satisfied in the depicted data-graph, since Robbie did not work with Hoffman in a film directed by Anderson, hence we do not have consistency with respect to $\{ \aNodeExpression \}$. Observe that the restriction also applies to Hoffman, thus it is required that he participates in at least one film by Anderson in order to satisfy the constraint.


\end{example}


\vspace{-5mm}
\paragraph{Repairs}
If a graph database $\aGraph$ is inconsistent with respect to a set of restrictions $\aRestrictionSet$ (i.e. there is a path expression or node expression in $\aRestrictionSet$ that is not satisfied), we would like to compute a new graph database $\aGraph'$ consistent with respect to $\aRestrictionSet$ that minimally differs from $\aGraph$. This new database $\aGraph'$ is usually called a \textit{repair} of $\aGraph$ with respect to $\aRestrictionSet$, following some formal definition for the semantics of `minimal difference'.

Here we consider \textit{set repairs}, in which the notion of minimal difference is based on sets of nodes and edges. While we could provide a notion of distance between arbitrary data-graphs via an adequate definition of symmetric difference, it has been the case that the complexity of finding such repairs is quite high, so it is common to consider set repairs where one graph is obtained from the other by only adding or only deleting information
\cite{barcelo2017data,lukasiewicz2013complexity,tenCate:2012}. This gives raise to subset and superset repairs, on which we focus in this work.

We say that a data-graph $\aGraph = (V,L_e,D)$ is a \defstyle{subset} of a data-graph $\aGraph'=(V',L_e',D')$ (written as $\aGraph \subseteq \aGraph'$) if and only if $V \subseteq V'$ and for all $v,v' \in V$ it happens that $L_e(v,v') \subseteq L_e'(v,v')$ and $D(v) = D'(v)$. In this case, we also say that $\aGraph'$ is a \defstyle{superset} of $\aGraph$.

\begin{definition}[Subset  and superset repairs]
Let $\aRestrictionSet$ be a set of restrictions and $\aGraph$ a data-graph. We say that $\aGraph'$ is a \defstyle{subset repair} (resp.\, \defstyle{superset repair}) or $\subseteq$-repair (resp.\, $\supseteq$-repair) of $\aGraph$ if: \emph{(a)} $\aGraph' \models \aRestrictionSet$, \emph{(b)} $\aGraph' \subseteq \aGraph$ (resp.\, $\aGraph' \supseteq \aGraph$), and \emph{(c)} there is no data-graph $\aGraph''$ such that $\aGraph''\models\aRestrictionSet$ and $ \aGraph' \subset \aGraph'' \subseteq \aGraph$ (resp.\, $\aGraph' \supset \aGraph'' \supseteq \aGraph$).
%
%
%
We note the set of subset (resp.\ superset) repairs of $\aGraph$ with respect to $\aRestrictionSet$ as $\subseteq$-$Rep(\aGraph,\aRestrictionSet)$ (resp.\, $\supseteq$-$Rep(\aGraph,\aRestrictionSet)$). 
\end{definition}

\begin{example}
In Example~\ref{figure:movieDB}, by deleting the node with value $\esDato{Margot Robbie}$ we obtain a $\subseteq$-repair of the graph database. 
\end{example}

\vspace{-3mm}
\paragraph{Preferences} Now we introduce the two preference criteria that we will use to induce orderings on the set of repairs.
In the manner done in \cite{bienvenu2014querying} for Description Logic knowledge bases, we provide a notion of {\it weight} over graph databases, which can be translated into preferences via the induced ordering.

\begin{definition}[Weight functions]
    Given a function $w: \Sigma_e \sqcup \Sigma_n \to \N$ (where $\sqcup$ denotes disjoint union), we can extend $w$ to any finite data-graph $G = (V,L_e,D)$ over $\Sigma_e$ and $\Sigma_n$ as
\vspace{-.25cm}
$$\weight(G) = \sum_{x,y \in V} \left(\sum_{z \in L_e(x,y)} \weight(z) \right) + \sum_{x \in V} \weight(D(x)).$$
\end{definition}





When considering a way to select one among various possible subset or superset repairs, one approach is to consider that different edge labels and data values are prioritized differently by being assigned different weights. These weights can be aggregated to obtain a measure of the weight of a whole data-graph, and this aggregated value can then be compared for all the possible repairs to obtain a preferred repair based on the natural ordering of non-negative integers.

\begin{definition}[Weight-based preferences]
Given a weight function $\weight$, we define that $G_1 \lessPreferedInducedWeight G_2$ iff $\weight(G_1) < \weight(G_2)$.

If $G_1 \lessPreferedInducedWeight G_2$, we say that $G_1$ is \defstyle{\weight-preferred} to $G_2$ in the context of superset repairs, while we say that $G_2$ is \defstyle{\weight-preferred} to $G_1$ in the case of subset repairs.
\end{definition}


\begin{example}\label{example:WeightsAndRepairs}
Consider a context where data-graphs represent physical networks, and where edges represent two different quality levels of connection (e.g. varying robustness, resistance to physical attacks) which we call $\downarrow_\esLabel{low}$ and $\downarrow_\esLabel{high}$. 
Let $\aRestrictionSet = \{\aPath_{connected\_dir}, \aPath_{2l\rightarrow good}\}$ be a set of restrictions, where 
\begin{align*}
 &\aPath_{connected\_dir} = \labelComodin^*   \\
 &\aPath_{2l\rightarrow good} = \downarrow_\esLabel{low} \downarrow_\esLabel{low}
\entoncesCamino \downarrow_\esLabel{high} \downarrow_\esLabel{low} \pathUnion \downarrow_\esLabel{low} \downarrow_\esLabel{high} \pathUnion \downarrow_\esLabel{high} \downarrow_\esLabel{high} \pathUnion \downarrow_\esLabel{high} \pathUnion \downarrow_\esLabel{low}.
\end{align*}

$\aPath_{connected\_dir}$ expresses the notion of directed connectivity, and 
$\aPath_{2l\rightarrow good}$  
establishes that if a node can be reached by two low-quality edges, then it is also possible to reach it by a `good' path. That is, it can be reached in either only one step, or in two steps but using at least one high-quality edge.

We could consider a weight function that attempts to represent the costs of building nodes and connections in this network. For example, it could assign a uniform weight to all data values $\weight(\aNode) = 20$, a low cost for low-quality connections $\weight(\downarrow_\esLabel{low}) = 1$, and higher costs for high-quality connections $\weight(\downarrow_\esLabel{high}) = 5$.

Now, given a data-graph $\aGraph$ that does not satisfy the restrictions, a $\weight$-preferred superset repair can be interpreted as the most cost-effective way of making a superset of the network that satisfies the restrictions while minimizing the costs given by $\weight$. For a full example, see Figure~\ref{figures:weightExamplePreferredRepair}.
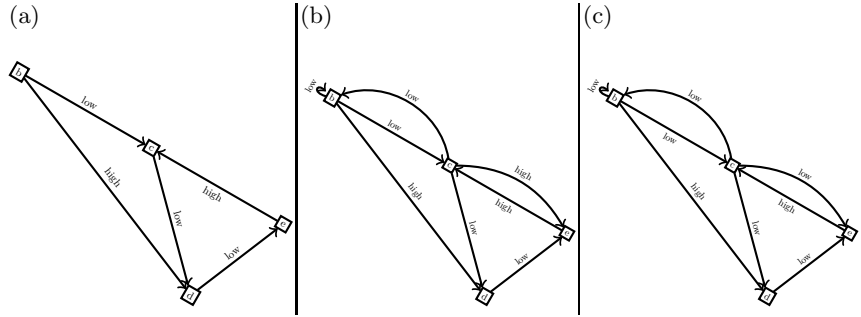
\begin{figure}[h]
	\centering
	\begin{tabular}{ l | l | l }
		(a) & (b) & (c)  \\ & & \\
		\begin{tikzpicture}%
			[rotate=-30,node distance={45mm}, thick, main/.style = {draw, rectangle},  scale=0.45, every node/.append style={transform shape}] 
			\node[main] (1) {c}; 
			\node[main] (2) [below right of=1] {d}; 
			\node[main] (3) [left of=1] {b}; 
			\node[main] (4) [right of=1] {e};
			\draw[->] (3) -- (1) node[midway, above=0.5pt, sloped]{low}; 
			\draw[->] (4) edge node[midway, below=0.5pt, sloped, pos=0.5]{high} (1);
			\draw[->] (1) -- (2) node[midway, above=0.5pt, sloped, pos=0.5]{low};
			\draw[->] (3) -- (2) node[midway, above=0.5pt, sloped, pos=0.5]{high};
			\draw[->] (2) -- (4) node[midway, above=0.5pt, sloped, pos=0.5]{low}; 
		\end{tikzpicture}
		&
		\begin{tikzpicture}%
			[rotate=-30,node distance={45mm}, thick, main/.style = {draw, rectangle},  scale=0.4, every node/.append style={transform shape}] 
			\node[main] (1) {c}; 
			\node[main] (2) [below right of=1] {d}; 
			\node[main] (3) [left of=1] {b}; 
			\node[main] (4) [right of=1] {e};
			\draw[->] (3) -- (1) node[midway, above=0.5pt, sloped]{low};
			\draw[->, loop left] (3) edge node[midway, above=0.5pt, sloped]{low} (3); 
			\draw[->] (4) edge node[midway, below=0.5pt, sloped, pos=0.5]{high} (1);
			\draw[->, bend left = 30] (1) edge node[midway, above=0.5pt, sloped, pos=0.5]{high} (4); 
			\draw[->] (1) -- (2) node[midway, above=0.5pt, sloped, pos=0.5]{low};
			\draw[->] (3) -- (2) node[midway, above=0.5pt, sloped, pos=0.5]{high};
			\draw[->] (2) -- (4) node[midway, above=0.5pt, sloped, pos=0.5]{low};
			\draw[->, bend right = 50] (1) edge node[midway, above=0.5pt, sloped, pos=0.5]{low} (3); 
		\end{tikzpicture}
		&
		\begin{tikzpicture}%
			[rotate=-30,node distance={45mm}, thick, main/.style = {draw, rectangle},  scale=0.4, every node/.append style={transform shape}] 
			\node[main] (1) {c}; 
			\node[main] (2) [below right of=1] {d}; 
			\node[main] (3) [left of=1] {b}; 
			\node[main] (4) [right of=1] {e};
			\draw[->] (3) -- (1) node[midway, below=0.5pt, sloped]{low};
			\draw[->, loop left] (3) edge node[midway, above=0.5pt, sloped]{low} (3); 
			\draw[->] (4) edge node[midway, below=0.5pt, sloped, pos=0.5]{high} (1);
			\draw[->, bend left = 30] (1) edge node[midway, above=0.5pt, sloped, pos=0.5]{low} (4); 
			\draw[->] (1) -- (2) node[midway, above=0.5pt, sloped, pos=0.5]{low};
			\draw[->] (3) -- (2) node[midway, above=0.5pt, sloped, pos=0.5]{high};
			\draw[->] (2) -- (4) node[midway, above=0.5pt, sloped, pos=0.5]{low};
			\draw[->, bend right = 50] (1) edge node[midway, above=0.5pt, sloped, pos=0.5]{low} (3); 
		\end{tikzpicture}
	\end{tabular}
	\caption{(a) A data-graph that does not satisfy $\aPath_{connected\_dir}$ nor $\aPath_{2l\rightarrow good}$ from Example~\ref{example:WeightsAndRepairs}: it is not connected as a directed graph, and the pair $(c,e)$ of nodes is connected via two $\downarrow_\esLabel{low}$ but cannot be connected via a `good' path. (b) A possible $\weight$ superset repair with respect to the example of figure (a); note that the removal of any of the new edges ends up violating $\aRestrictionSet$. The associated weight of this repair is the original weight plus 1 + 1 + 5 (from the added \esLabel{low} and \esLabel{high} edges). (c) A $\weight$-preferred superset repair with respect to the example of figure (a). The associated extra weight of this repair is 3, and it can be proved that there is no other superset repair with a lower weight.}
	\label{figures:weightExamplePreferredRepair}
\end{figure}
\end{example}


 The weight function $\weight$ (over $\Sigma_n$ and $\Sigma_e$) is considered fixed in general, and it should be an `easy' function to compute. That is, given a reasonable encoding for $\Sigma_e \sqcup \Sigma_n$, we expect that $\weight(x) \leq 2^{p(|x|)}$ for every $x \in \Sigma_e \sqcup \Sigma_n$ and some polynomial $p(n)$, and also assume that the result of $\weight(x)$ should be computable in polynomial time over $|x|$, the \textit{size} of $x$. Other kinds of restrictions could be made upon $\weight$ (for example, that $\weight(x) \leq q(|x|)$ for some polynomial $q(n)$) depending on the kind of weight function wanted to be modeled, but we note that, without any restriction, $w$ could even be an uncomputable function. 



\

The second type of preference criteria we study is based on \textit{multisets}. 

\vspace{-1mm}
\begin{definition}[Multisets]
 Given a set $A$, its set of \defstyle{finite multisets} 
 is 
 defined as 
$\multiset{A} =\{ M: A \to \N \mid M(x) \neq 0 \text{ only for a finite number of }x \}.$
Given a strict partial order $(A, <)$, the \defstyle{multiset ordering} 
$(\multiset{A}, \lessMultiset)$ is defined as in {\em\cite{DershowitzManna,huet1980equations}}: $M_1 \lessMultiset M_2$ iff $M_1 \neq M_2$ and for all $x \in A$, if $M_1(x) > M_2(x)$, then there exists some $y \in A$ such that $x < y$ and  $M_1(y) < M_2(y)$.
\end{definition}


 If $(A, <)$ is a strict partial (resp. total) order, then $(\multiset{A}, \lessMultiset)$ is a partial (resp. total) order. If $(A, <)$ is a well-founded order\footnote{I.e. for all $S \subseteq A$, if $S \neq \emptyset$ then there exists $m \in S$ such that $s\not < m$ for every $s \in S$.}, then  we have that $(\multiset{A}, \lessMultiset)$ is also a well-founded order \cite{DershowitzManna}.


\begin{definition}
Given a finite data-graph $G$ over $\Sigma_e$ and $\Sigma_n$, we define its \defstyle{multiset of edges and data values} as the multiset \edgeDataMultiset{G} over $\Sigma_e$ and $\Sigma_n$ such that: 

$$ 
\edgeDataMultiset{G}(x) = \begin{cases}
|\semantics{x}_\aGraph| & x \in \Sigma_e \\
|\semantics{x^=}_\aGraph| & x \in \Sigma_n.  
\end{cases}
$$
 \end{definition}

 Note that all these multisets of edges and data values belong to $\multiset{A}$ with $A = \Sigma_e \sqcup \Sigma_n$.

 \begin{definition}[Multiset-based preferences]\label{definition:MultisetPreferencesFromPartialOrder}
 Let $G_1, G_2$ be two finite data-graphs over $\Sigma_e$ and $\Sigma_n$, and let $<$ be a partial order defined over 
 $A = \Sigma_e \sqcup \Sigma_n$. 
We say that $G_1 \lessPreferedInducedMultiset G_2$ if 
 $\edgeDataMultiset{G_1} \lessMultiset \edgeDataMultiset{G_2} $.

 If $G_1 \lessPreferedInducedMultiset G_2$, we say that $G_1$ is \defstyle{\multisetCriteria-preferred} to $G_2$ in the context of superset repairs, while we say that $G_2$ is \defstyle{\multisetCriteria-preferred} to $G_1$ in the case of subset repairs.
 

 \end{definition}



\begin{example}\label{example:multisetRepair}
    Consider the data-graphs from Figure~\ref{figures:weightExamplePreferredRepair}. Ignoring any possible data value, observe that the multisets corresponding to graphs b) and c) are (with the informal multiset notation): $\{\esLabel{low}, \esLabel{low}, \esLabel{low}, \esLabel{low}, \esLabel{low},\esLabel{high}, \esLabel{high}, \esLabel{high}\}$ and $\{\esLabel{low}, \esLabel{low}, \esLabel{low}, \esLabel{low}, \esLabel{low}, \esLabel{low},\esLabel{high}, \esLabel{high}\}$, respectively. Assuming $\esLabel{low}>\esLabel{high}$, then in this case data-graph b) is \multisetCriteria-preferred to c). 
\end{example}


\section{Preferred repairs}\label{Section:PrefRepairs}

In this section we consider subset and superset repairs and \preferenceCriteria-preferred criteria where  $\preferenceCriteria \in \{\weight, \multisetCriteria\}$, using different subsets of \Gregxpath for $\mathcal{L}$.

We will always consider the weight function $w:\Sigma_e \sqcup \Sigma_n \to \N$ fixed and efficiently computable: given a codification of $x \in \Sigma_e \sqcup \Sigma_n$ of size $n$ the value $w(x)$ is computable in $poly(n)$. This implies that, for any data-graph $\aGraph$, $w(\aGraph)$ is also computable in $poly(|\aGraph|)$.

In the same manner, when considering multiset-preferred repairs, we will assume that the order $<$ defined over $\Sigma_e \sqcup \Sigma_n$ can be computed efficiently: given $x,y \in \Sigma_e \sqcup \Sigma_n$ it is possible to decide if $x<y$ in polynomial time on the representation on both $x$ and $y$. Thus, we can decide whether $G_1 <_{Gmset} G_2$ in $poly(|G_1| + |G_2|)$. Furthermore, we assume that the order $<$ is well founded. This implies that there are no infinite descending chains $\aData_1 > \aData_2 > \ldots$

Note that the notions of weight-based and multiset-based preferences induce an ordering over finite data-graphs, which does not admit infinite descending chains. Hence, if a superset repair exists, then there is also a (weight-based or multiset-based) preferred repair. On the other hand, the number of subsets of a given data-graph is finite, so if a repair exists, there must necessarily be a preferred repair.

The complexity 
of the problems when the set of expressions $\aRestrictionSet$ is fixed is commonly denominated \textit{data complexity}. Most lower bounds we derive apply to this case.

\subsection{Preferred Subset Repairs}

For the case of $\subseteq$-repairs, it was proved in \cite{Abriola-etal-2023} that deciding if there exists a non-trivial repair (i.e. different from the $\emptyset$ data-graph) is an \textsc{NP-complete} problem for a fixed set $\aRestrictionSet$ of $\Gposregxpath$ expressions, and that the problem is tractable if we only allow node expressions from \Gposregxpath as $\mathcal{L}$. 
Observe that:

\begin{proposition}\label{prop:reduction-subset}
The problem of deciding if $\aGraph$ has a non-trivial $\subseteq$-repair with respect to $\aRestrictionSet$ can be reduced to the problem of deciding whether $\aGraph$ has a non-trivial $\preferenceCriteria$-preferred $\subseteq$-repair with respect to $\aRestrictionSet$.
\end{proposition}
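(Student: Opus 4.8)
The plan is to show that the identity map on inputs is already a correct reduction: that is, $(\aGraph,\aRestrictionSet)$ is a positive instance of the first problem if and only if it is a positive instance of the second. Concretely, I would establish the equivalence
\[
\aGraph \text{ has a non-trivial } \subseteq\text{-repair}
\iff
\aGraph \text{ has a non-trivial } \preferenceCriteria\text{-preferred } \subseteq\text{-repair},
\]
where throughout the restriction set is $\aRestrictionSet$. Since both problems receive exactly the same input, once this equivalence is proved the reduction is simply the identity, which is trivially polynomial-time computable; this is precisely what lets one carry the \textsc{NP-completeness} lower bound of \cite{Abriola-etal-2023} over to the $\preferenceCriteria$-preferred setting.

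The right-to-left direction is immediate, since by definition every $\preferenceCriteria$-preferred $\subseteq$-repair is in particular a $\subseteq$-repair, and non-triviality is inherited verbatim. The crux is the left-to-right direction, and the key observation I would isolate is that the existence of a single non-trivial $\subseteq$-repair forces \emph{every} $\subseteq$-repair to be non-trivial. Indeed, suppose $\aGraph'$ is a non-trivial $\subseteq$-repair, so $\aGraph'$ is consistent and $\emptyset \subset \aGraph' \subseteq \aGraph$. Then the empty data-graph $\emptyset$ cannot itself be a $\subseteq$-repair: it violates the maximality clause (c) of the definition, because $\aGraph'$ witnesses a consistent data-graph with $\emptyset \subset \aGraph' \subseteq \aGraph$. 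Hence $\emptyset \notin {\subseteq}\text{-}Rep(\aGraph,\aRestrictionSet)$, and therefore no $\subseteq$-repair of $\aGraph$ equals $\emptyset$.

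To conclude, I would invoke the existence argument already recorded in the text: because a given data-graph has only finitely many subsets, the mere existence of some $\subseteq$-repair guarantees the existence of a $\preferenceCriteria$-preferred one (a maximal element of a finite, nonempty set of repairs under the preference order). By the observation above this preferred repair is not $\emptyset$, hence it is non-trivial, which yields the left-to-right direction. I do not expect a genuine obstacle here; the only point that needs care is the precise reading of the maximality condition (c), which is exactly what excludes $\emptyset$ from being a repair whenever any nonempty consistent subset of $\aGraph$ survives. I would also remark that the argument is uniform in $\preferenceCriteria \in \{\weight,\multisetCriteria\}$ and uses no structural feature of the chosen preference order beyond the well-foundedness already assumed, so the same proof covers both criteria simultaneously.
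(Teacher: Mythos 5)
Your proposal is correct and follows essentially the same route as the paper, which justifies the proposition by the one-line observation that a $\subseteq$-repair exists iff a $\preferenceCriteria$-preferred one does, making the identity map the reduction. Your additional remark that a single non-trivial $\subseteq$-repair excludes $\emptyset$ from being a repair via the maximality clause (c) is exactly the detail the paper leaves implicit, and it is argued correctly.
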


Given a fixed weight function $\weight$ (resp.\ an ordering $<$), the existence of a $\subseteq$-repair $\aGraph'$ of $\aGraph$ with respect to $\aRestrictionSet$ implies the existence of a preferred $\subseteq$-repair for $\aGraph$ (and vice versa). Therefore, it follows directly from Proposition~\ref{prop:reduction-subset} and the aforementioned results in \cite{Abriola-etal-2023} that:

\begin{theorem}\label{teo:subset-w-preferred}
The problem of deciding if there exists a non-trivial \preferenceCriteria-preferred $\subseteq$-repair for a given data-graph $\aGraph$ and a set of $\Gposregxpath$ expressions $\aRestrictionSet$ is \textsc{NP-complete} for a fixed set of $\Gposregxpath$ path expressions.
\end{theorem}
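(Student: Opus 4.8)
The plan is to show that, on every input, the decision problem for non-trivial $\preferenceCriteria$-preferred $\subseteq$-repairs has exactly the same yes/no answer as the plain decision problem for non-trivial $\subseteq$-repairs, which is \textsc{NP-complete} by \cite{Abriola-etal-2023}. Hardness then comes for free from Proposition~\ref{prop:reduction-subset}, and membership in \textsc{NP} follows from the reverse implication. Concretely, I would split the argument into a hardness part (already packaged by the proposition) and an equivalence-of-existence lemma that yields membership.

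For \textsc{NP}-hardness I would invoke Proposition~\ref{prop:reduction-subset} directly: it reduces the problem of deciding whether $\aGraph$ has a non-trivial $\subseteq$-repair with respect to $\aRestrictionSet$ to the problem of deciding whether it has a non-trivial $\preferenceCriteria$-preferred one, with the same restriction set. Since the former is \textsc{NP}-hard for a fixed set of $\Gposregxpath$ path expressions by \cite{Abriola-etal-2023}, so is the latter, and this already holds at the level of data complexity.

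The core step, needed for membership, is the equivalence: $\aGraph$ admits a non-trivial $\preferenceCriteria$-preferred $\subseteq$-repair if and only if it admits a non-trivial $\subseteq$-repair. The ``only if'' direction is immediate, since a preferred repair is in particular a repair. For the converse, suppose a non-trivial $G' \models \aRestrictionSet$ with $\emptyset \subsetneq G' \subseteq \aGraph$ exists. Then the empty data-graph fails condition (c) of the definition of $\subseteq$-repair, witnessed by $G'$, so $\emptyset$ is \emph{not} a $\subseteq$-repair; consequently \emph{every} element of $\subseteq$-$Rep(\aGraph,\aRestrictionSet)$ is non-trivial. Because $\aGraph$ has finitely many subsets, the preference ordering restricted to this non-empty finite family has a maximal element, i.e.\ a $\preferenceCriteria$-preferred repair exists, and by the previous observation it is non-trivial. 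I would note that this argument is uniform in $\preferenceCriteria \in \{\weight, \multisetCriteria\}$, and that it can alternatively be justified by the fact that the empty data-graph is the least element of both orderings (weight $0$, empty multiset), so it can never strictly dominate a non-trivial repair in the subset setting.

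Granting this equivalence, membership closes: deciding the existence of a non-trivial $\preferenceCriteria$-preferred $\subseteq$-repair is the same decision problem as deciding the existence of a non-trivial $\subseteq$-repair, which lies in \textsc{NP} by \cite{Abriola-etal-2023}; together with the hardness above this gives \textsc{NP}-completeness. The only genuinely delicate point I expect is this non-triviality bookkeeping — making sure that restricting attention to \emph{preferred} repairs cannot turn a ``yes'' instance into a ``no'' instance (for instance by leaving only the empty repair preferred). The observation that the empty graph ceases to be a repair as soon as a non-trivial one exists is exactly what rules this out, and I would state it as the crux of the proof.
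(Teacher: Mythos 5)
Your proposal is correct and takes essentially the same route as the paper: the paper likewise derives the theorem from the equivalence ``a non-trivial $\subseteq$-repair exists iff a non-trivial $\preferenceCriteria$-preferred $\subseteq$-repair exists'' together with Proposition~\ref{prop:reduction-subset} and the \textsc{NP}-completeness results of \cite{Abriola-etal-2023}. Your explicit justification of the non-triviality bookkeeping (the empty data-graph ceases to be a repair once a non-trivial consistent subgraph exists, and finiteness guarantees a preferred element) just spells out what the paper leaves implicit.
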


When $\mathcal{L} \subseteq \Gposregxpathnode$, a subset repair can be computed in polynomial time and, furthermore, it is unique \cite{Abriola-etal-2023}. Therefore, it must be the preferred one:

\begin{theorem}
    Given a data-graph $\aGraph$, a set of $\Gposregxpath$ expressions $\aRestrictionSet$, and a preference criteria $\preferenceCriteria$, there exists an algorithm that computes the $\preferenceCriteria$-preferred $\subseteq$-repair of $\aGraph$ with respect to $\aRestrictionSet$. 
\end{theorem}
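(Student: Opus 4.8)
I plan to prove the statement by exhibiting a terminating brute-force procedure, relying only on the fact that the space of candidate subset repairs is finite and that every test involved is effective. Since the theorem asks merely for the \emph{existence} of an algorithm, and Theorem~\ref{teo:subset-w-preferred} already shows the associated decision problem is \textsc{NP-hard} for full $\Gposregxpath$ path expressions, I do not aim for efficiency: the point is solely that a preferred $\subseteq$-repair can always be computed, even when $\aRestrictionSet$ admits many distinct $\subseteq$-repairs and the preference criterion genuinely discriminates among them.

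First I would enumerate the candidate subgraphs. A data-graph $\aGraph'$ with $\aGraph' \subseteq \aGraph$ is completely determined by a choice of a subset of the nodes of $\aGraph$ together with, for each retained pair of nodes, a subset of its edge labels (the data value of each retained node being forced by the subset condition $D'(v)=D(v)$). Hence there are only finitely many such $\aGraph'$, and they can be generated explicitly. For each candidate I would then decide whether $\aGraph' \models \aRestrictionSet$; this is effective because the semantics of every $\Gposregxpath$ node and path expression given in Section~\ref{Section:Definitions} are computable over a finite data-graph (each operator reduces to a standard relational computation, with $\aPath^*$ handled by transitive closure), so one can compute $\semantics{\aNodeExpression}_{\aGraph'}$ and $\semantics{\aPath}_{\aGraph'}$ and check $\semantics{\aNodeExpression}_{\aGraph'}=V_{\aGraph'}$ and $\semantics{\aPath}_{\aGraph'}=V_{\aGraph'}\times V_{\aGraph'}$. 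Collecting the consistent subgraphs and retaining those that are $\subseteq$-maximal yields exactly $\subseteq\text{-}Rep(\aGraph,\aRestrictionSet)$. This set is finite, and it is nonempty: the empty data-graph is vacuously consistent (all semantic sets are empty and coincide with $V_{\aGraph'}$ and $V_{\aGraph'}\times V_{\aGraph'}$), so at least one $\subseteq$-maximal consistent subgraph exists.

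Finally I would select a $\preferenceCriteria$-preferred element of $\subseteq\text{-}Rep(\aGraph,\aRestrictionSet)$. Recall that for subset repairs a repair $\aGraph'$ is preferred precisely when no other subset repair $\aGraph''$ satisfies $\aGraph' \lessPreferedInducedWeight \aGraph''$ (resp.\ $\aGraph' \lessPreferedInducedMultiset \aGraph''$), i.e.\ when $\aGraph'$ is maximal under the induced ordering. By the standing assumptions, $\weight$ and the order $<$ on $\Sigma_e \sqcup \Sigma_n$ are efficiently computable, so for any two repairs $\aGraph_1,\aGraph_2$ I can compute $\weight(\aGraph_i)$ (resp.\ the multisets $\edgeDataMultiset{\aGraph_i}$) and decide $\aGraph_1 \lessPreferedInducedWeight \aGraph_2$ (resp.\ $\aGraph_1 \lessPreferedInducedMultiset \aGraph_2$). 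Since $\subseteq\text{-}Rep(\aGraph,\aRestrictionSet)$ is finite and nonempty, a maximal element under the preference ordering always exists and can be located by pairwise comparison; the algorithm outputs one such element, which by construction is both a $\subseteq$-repair and $\preferenceCriteria$-preferred among all $\subseteq$-repairs.

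The step deserving the most care is the final selection in the multiset case: because $(\multiset{A},\lessMultiset)$ is only a \emph{partial} order, there may be several pairwise-incomparable maximal $\subseteq$-repairs, so ``the'' preferred repair is not literally unique and the procedure returns any one of them (the weight case is cleaner, as $(\N,<)$ is total and all preferred $\subseteq$-repairs share the maximum weight). I expect this bookkeeping — verifying that a maximal element under a possibly partial preference order exists over the finite, nonempty set of $\subseteq$-repairs, and that such an element is genuinely a \emph{preferred} repair rather than merely a $\subseteq$-maximal consistent subgraph — to be the main, though routine, obstacle.
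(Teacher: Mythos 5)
Your proposal is correct and proves the statement as literally written, but it is a genuinely different argument from the paper's. The paper gives no separate proof for this theorem: its justification is the single sentence preceding it, namely that by \cite{Abriola-etal-2023}, when $\mathcal{L} \subseteq \Gposregxpathnode$ the $\subseteq$-repair is \emph{unique} and computable in polynomial time, hence it is vacuously the $\preferenceCriteria$-preferred one under any criterion --- no enumeration and no preference comparison ever takes place. (Note the mismatch in the paper itself: the theorem statement says ``$\Gposregxpath$ expressions'' while the justifying sentence concerns $\Gposregxpathnode$; the paper's argument only covers the latter.) Your brute-force route --- enumerate all subgraphs, keep the consistent ones, filter for $\subseteq$-maximality, then select a maximal element of the preference order --- is more elementary and strictly more general: it uses only decidability of model checking over finite data-graphs, so it applies to full $\Gposregxpath$ (indeed to $\Gregxpath$), and it correctly handles the non-uniqueness that then arises, including several pairwise-incomparable preferred repairs under $\lessMultiset$, which is only a partial order. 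What the paper's route buys is polynomial time and a well-defined ``the'' preferred repair; what your route buys is coverage of the statement as actually written, where, as you observe, exponential time is unavoidable unless $\textsc{P}=\textsc{NP}$: since the empty data-graph is a repair exactly when it is the only consistent subgraph, any algorithm computing a preferred $\subseteq$-repair decides the non-trivial-repair existence problem, which is \textsc{NP}-complete by Theorem~\ref{teo:subset-w-preferred}. Your supporting details (the empty data-graph is always consistent, so the repair set is finite and non-empty; finite non-empty posets have maximal elements) are sound.
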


\subsection{Preferred Superset Repairs}


There is a restriction set $\aRestrictionSet$ containing only $\Gposregxpathnode$ expressions such that computing $\weight$-preferred $\supseteq$-repair is already intractable:

\begin{theorem}\label{teo:supset-weight-hard}
    Given a data-graph $\aGraph$, a set of $\Gposregxpathnode$ expressions $\aRestrictionSet$ and a natural number $K$, let $\Pi_\weight$ be the problem of deciding if there exists a $\weight$-preferred $\supseteq$-repair of $\aGraph$ with respect to $\aRestrictionSet$ whose weight is bounded by $K$.
    Then, there exists a set of positive node expressions $\mathcal{R}$ and a weight function $\weight$ such that the problem is \textsc{NP-complete}.
\end{theorem}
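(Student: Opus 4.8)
The plan is to establish both membership in \textsc{NP} and \textsc{NP}-hardness for a carefully chosen fixed pair $(\mathcal{R},\weight)$. For membership, I would first observe that a $\weight$-preferred $\supseteq$-repair is exactly a $\supseteq$-repair of minimum weight, and that, since weights are non-negative, deleting edges or nodes from any consistent superset never increases its weight and eventually yields a $\supseteq$-repair. Consequently $\Pi_\weight$ is a yes-instance if and only if $\aGraph$ admits a consistent superset of weight at most $K$: any such superset contains a repair of no larger weight, and the preferred repair attains the minimum over all repairs, hence has weight $\le K$. This reduces the question to guessing a consistent superset of weight $\le K$. To see that a polynomial-size certificate suffices I would use that every expression in $\Gposregxpathnode$ is monotone, i.e.\ if $\aGraph_1 \subseteq \aGraph_2$ then $\semantics{\varphi}_{\aGraph_1} \subseteq \semantics{\varphi}_{\aGraph_2}$ for every node expression $\varphi$. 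Thus adding edges or nodes never falsifies an already-satisfied node, and each existential reachability or data requirement of the fixed constraints can be met by a bounded gadget that is shared across nodes and realized with self-loops on the original nodes; a witnessing superset adding at most $|\Sigma_e|\cdot|V|^2$ edges (plus a polynomial number of helper nodes) therefore exists. Verifying consistency is just evaluating the fixed $\mathcal{R}$ at each node, which is polynomial, and checking the weight bound is polynomial, so $\Pi_\weight \in \textsc{NP}$.

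For hardness I would reduce from \textsc{Vertex Cover}. Given a graph $H=(V_H,E_H)$ and a bound $t$, build $\aGraph$ with one node $n_v$ of data value $\esDato{vertex}$ for each $v\in V_H$, one node $n_e$ of data value $\esDato{edge}$ for each $e=\{u,v\}\in E_H$, and the two $\esLabel{inc}$-labelled edges $n_e\to n_u$ and $n_e\to n_v$. Take the single positive node constraint $\mathcal{R}=\{\,\esDato{vertex}^{=}\ \vee\ \comparacionCaminos{\down_\esLabel{inc}\down_\esLabel{sel}}\,\}$, the fixed weight function assigning $1$ to every label in $\Sigma_e$ and $0$ to every value in $\Sigma_n$, and the budget $K=2|E_H|+t$. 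The constraint is vacuous at $\esDato{vertex}$-nodes, while at an $\esDato{edge}$-node $n_e$ it forces an $\esLabel{inc}$-edge followed by an $\esLabel{sel}$-edge, i.e.\ some endpoint of $e$ must carry an outgoing $\esLabel{sel}$-edge, which we read as \emph{selecting} that endpoint into the cover. Since only additions are allowed, the $2|E_H|$ original $\esLabel{inc}$-edges are always present, contributing weight exactly $2|E_H|$, so the budget leaves precisely $t$ units for added edges.

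Correctness then splits in two. If $S$ is a vertex cover of $H$ with $|S|\le t$, adding an $\esLabel{sel}$ self-loop at each $n_v$ with $v\in S$ gives a consistent superset of added weight $|S|\le t$, hence total weight $\le K$. Conversely, from a consistent superset of weight $\le K$ I would let $s$ and $a$ be the numbers of added $\esLabel{sel}$- and $\esLabel{inc}$-edges (so $s+a\le t$) and set $S$ to be the $\esDato{vertex}$-nodes with an outgoing $\esLabel{sel}$-edge, giving $|S|\le s$. For each $e$, consistency provides a witness $n_e\to z$ via $\esLabel{inc}$ with an outgoing $\esLabel{sel}$-edge at $z$: if $z$ is an endpoint of $e$ then $S$ already covers $e$; otherwise the edge $n_e\to z$ is one of the added $\esLabel{inc}$-edges and is private to $n_e$, so at most $a$ edges fall in this case, and adding one endpoint of each of them to $S$ yields a vertex cover of size $\le s+a\le t$. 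Together with membership this gives \textsc{NP}-completeness.

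The main obstacle is exactly this converse direction: a repair could try to \emph{cheat} by rerouting, through freshly added $\esLabel{inc}$-edges, every $\esDato{edge}$-node to one single selected node rather than selecting genuine endpoints. The charging argument is what rules this out, since each added $\esLabel{inc}$-edge is private to one $\esDato{edge}$-node and costs exactly as much as selecting a vertex, so rerouting can never beat a genuine cover. This is also why fixed unit weights suffice and no instance-dependent scaling of $\weight$ is needed, which is essential because $\weight$ must be chosen independently of the input. A secondary technical point, handled by monotonicity of $\Gposregxpathnode$, is bounding the size (in particular the number of fresh nodes) of a minimal repair so that the \textsc{NP} certificate stays polynomial.
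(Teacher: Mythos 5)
Your hardness reduction is correct but takes a genuinely different route from the paper's. The paper reduces from 3\textsc{-SAT}: it encodes variables and clauses as nodes, fixes a weight function with $\weight(\esLabel{value\_of})=1$ and weight $2$ on everything else, and sets the tight budget $K=\weight(\aGraph)+n$ so that any superset repair within budget can add \emph{only} the $n$ truth-assignment edges (one per variable), from which a satisfying valuation is read off directly. Your reduction from \textsc{Vertex Cover} with unit edge weights instead permits the repair to add spurious $\esLabel{inc}$-edges and neutralizes this with a charging argument: each added $\esLabel{inc}$-edge is private to a single $\esDato{edge}$-node and costs exactly as much as selecting a vertex, so rerouting never beats a genuine cover. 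Both constructions use a fixed constraint set and a fixed weight function, as the statement requires; the paper's budget-tightness trick makes the converse direction essentially immediate, while your charging argument is slightly more delicate but equally valid.

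The one soft spot is membership in \textsc{NP}. Your reformulation (yes-instance iff some consistent superset of weight at most $K$ exists) is correct, but the existence of a \emph{polynomial-size} witness of weight at most $K$ is not fully established by your gadget sketch. The gadget you describe builds \emph{some} polynomial-size consistent superset, but that particular superset need not respect the weight bound $K$; and, in the other direction, trimming a given low-weight consistent superset down to polynomial size must preserve consistency at every retained fresh node, since each added witness node must itself satisfy all of $\aRestrictionSet$, which can cascade. The paper resolves exactly this point by invoking a ``standard form'' theorem for superset repairs from prior work and observing that, because $\weight$ is non-negative, the minimum-weight repair can be taken in standard form and hence has size $poly(|\aGraph|+|\aRestrictionSet|)$. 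You would need to prove or cite such a normal-form lemma; without it the \textsc{NP} upper bound is incomplete, although the overall strategy is sound.
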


\begin{proof}
    The problem is in \textsc{NP} in general: if there exists a repair of $\aGraph$ with respect to $\aRestrictionSet$, due to \cite[Theorem~24]{Abriola-etal-2023}, then there is one in `standard form'. The size of this repair is bounded by $poly(|\aGraph| + |\aRestrictionSet|)$, and by inspecting the proof of the theorem and considering that $\weight$ is a non-negative function, it can be shown that the preferred repair always has this `standard form'. Then, a positive certificate consists of a data-graph $\aGraph'$ such that $|\aGraph'| \leq poly(|\aGraph| + |\aRestrictionSet|)$, $\aGraph'\models \aRestrictionSet$, $\aGraph \subseteq \aGraph'$ and $w(\aGraph') \leq K$.

    For the hardness, we reduce 3\textsc{-SAT} to $\Pi_\weight$, with $\aRestrictionSet$ fixed. For the reduction, we consider $\Sigma_e = \{\esLabel{value\_of}, \esLabel{appears\_in}, \esLabel{appears\_negated\_in}\}$ and $\Sigma_n = \{clause, var, \top,\bot\}$. We define the weight function $w$ as $w(c) = 2$ for $c \in \Sigma_n \cup \{\esLabel{appears\_in}, \esLabel{appears\_negated\_in}\}$ and $w(\esLabel{value\_of)} = 1$.

Given a 3\textsc{-CNF} formula $\phi$ with $n$ variables and $m$ clauses we build a data-graph $\aGraph$, a set of $\Gposregxpathnode$ node expressions $\aRestrictionSet$ and define a number $K$ such that $\phi$ is satisfiable if and only if $\aGraph$ has a superset repair $\aGraph'$ with respect to $\aRestrictionSet$ such that $w(\aGraph') \leq K$.

We define the graph as $G=(V_\aGraph, L_\aGraph, D_\aGraph)$ where:
\begin{align*}
    &V_\aGraph = \{x_i \mid 1 \leq i \leq n\} \cup \{c_j \mid 1\leq j \leq m\} \cup \{\bot, \top\} \\
    &L_\aGraph(x_i,c_j) =  \begin{cases} 
      \{\esLabel{appears\_in}, \esLabel{appears\_negated\_in}\}\text{ if } x_i \text{ and } \lnot x_i \text{ appear in }c_j \\
      \{\esLabel{appears\_in}\}\text{ if only } x_i \text{ appears in }c_j \\
      \{\esLabel{appears\_negated\_in}\}\text{ if only }\lnot x_i \text{ appears in }c_j \\
      \emptyset \text{ otherwise} 
   \end{cases} \\
   &L_\aGraph(v,w) = \emptyset \text{ for every other pair }v,w \in V_\aGraph \\
   &D_\aGraph(x) = x \text{ for } x\in\{\bot,\top\} \\
   &D_\aGraph(x_i) = var \text{ for }1\leq i \leq n  \\
   &D_\aGraph(c_j) = clause \text{ for }1 \leq j \leq m.
\end{align*}

    
    
   
   
   
   
   

The structure of $\phi$ is codified in the \esLabel{appears} edges. We define $K=w(\aGraph) + n$, and we want any superset repair of $\aGraph$ with respect to $\aRestrictionSet$ with weight $K$ to codify an assignment of the `node' variables by using the edges \esLabel{value\_of}. In order to do this we define the \Gposregxpathnode expressions
\begin{align}
    \aFormulab_1 &= \comparacionCaminos{[var^{\neq}] \cup \down_\esLabel{value\_of} [\top] \cup \down_\esLabel{value\_of} [\bot]}\\
    \aFormulab_2 &= \comparacionCaminos{ [clause^{\neq}] \cup \down_\esLabel{appears\_in}^- \down_\esLabel{value\_of} [\top] \cup \down_\esLabel{appears\_negated\_in}^- \down_\esLabel{value\_of} [\bot]}. 
\end{align}

The expression $\psi_1$ forces every variable node to have a \esLabel{value} edge directed to a boolean node, while the expression $\psi_2$ forces every clause to be `satisfied' in any repair of $\aGraph$. Therefore, we define $\aRestrictionSet = \{ \aFormulab_1, \aFormulab_2\}$. Now we show that $\phi$ is satisfiable if and only if $\aGraph$ has a superset repair with respect to $\aRestrictionSet$ with weight at most $w(\aGraph) + n$.

$\implies)$ Let $f$ be a valuation on the variables of $\phi$ that evaluates $\phi$ to true. We then `add' edges to $\aGraph$ in the following way: if $f(x_i) = \top$ we add the edge $(x_i,\esLabel{value\_of},\top)$, and otherwise we add $(x_i,\esLabel{value\_of},\bot)$. This graph satisfies both expressions from $\aRestrictionSet$ and has cost $w(\aGraph) + n$, since $w(\esLabel{value\_of}) = 1$.

$\impliedby)$ Let $\aGraph'$ be a superset repair of $\aGraph$ with respect to $\aRestrictionSet$ with weight at most $w(\aGraph) + n$. Since every superset repair of $\aGraph$ has to add at least $n$ \esLabel{value} edges and they cost 1 unit each we know that $\aGraph'$ has to be exactly the original graph $\aGraph$ plus $n$ \esLabel{value} edges (one for each node variable). Then we can define a valuation of the variables of $\phi$ by using this edges: if the edge $(x_i,\esLabel{value\_of}, \top)$ is present in $\aGraph'$ we define $f(x_i) = \top$, and otherwise $f(x_i) = \bot$. Since $\aFormulab_2$ is satisfied in $\aGraph'$ this assignment must satisfy $\phi$. \qed
\end{proof}

The hardness of $\Pi_\weight$ implies that, unless $\textsc{P}=\textsc{NP}$ there is no algorithm to compute $\weight$-preferred $\supseteq$-repairs for fixed sets of $\Gposregxpathnode$ expressions.

Furthermore, computing a multiset preferred repair is also a hard problem for simple $\Gposregxpathnode$ expressions:

\begin{theorem}\label{teo:supset-multiset-hard}

Given a data-graph $\aGraph$, a set of $\Gposregxpathnode$ expressions $\aRestrictionSet$, an edge label $\aLabel \in \Sigma_e$, and a natural number $K$, let $\Pi_\multisetCriteria$ be the problem of deciding if $\aGraph$ has a $\multisetCriteria$-preferred $\supseteq$-repair $\aGraph'$ with respect to $\aRestrictionSet$ such that $\aGraph'$ has at most $K$ edges with label $\aLabel$. Then, there exists a set of positive node expressions $\aRestrictionSet$ and a well-ordering $<$ such that the problem is \textsc{NP-complete}.

\end{theorem}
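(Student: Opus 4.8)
The plan is to establish \textsc{NP}-completeness by reusing, almost verbatim, the gadget from the proof of Theorem~\ref{teo:supset-weight-hard} and replacing the weight bound by a bound on the number of $\esLabel{value\_of}$ edges. Concretely, for the hardness I would take the same data-graph $\aGraph$, the same fixed set $\aRestrictionSet = \{\aFormulab_1, \aFormulab_2\}$ of $\Gposregxpathnode$ expressions, set $\aLabel = \esLabel{value\_of}$ and $K = n$, and choose as the well-ordering any total order $<$ on $A = \Sigma_e \sqcup \Sigma_n$ in which $\esLabel{value\_of}$ is the \emph{minimum} element (so in particular $\esLabel{value\_of} < \esLabel{appears\_in}$ and $\esLabel{value\_of} < \esLabel{appears\_negated\_in}$); since $A$ is finite, any total order is well-founded. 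Because the constraints are positive node expressions, consistency is monotone under adding edges, so every consistent superset of $\aGraph$ must contain at least one $\esLabel{value\_of}$ edge per variable node in order to satisfy $\aFormulab_1$, i.e. has at least $n$ such edges, and adding any further edge or node only enlarges $\edgeDataMultiset{\cdot}$. I will show that $\phi$ is satisfiable iff $\aGraph$ admits a $\multisetCriteria$-preferred $\supseteq$-repair with at most $K=n$ edges labelled $\esLabel{value\_of}$.

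For the forward direction, if $f$ satisfies $\phi$ then the repair $\aGraph_f$ that adds exactly the $n$ assignment edges (one $\esLabel{value\_of}$ edge per variable) and nothing else is consistent. Its multiset equals that of $\aGraph$ plus $n$ copies of $\esLabel{value\_of}$, which is \emph{componentwise} the smallest multiset any consistent superset can have, since every consistent superset already dominates $\aGraph$ and must add $\ge n$ value edges. As componentwise domination implies $\lessMultiset$-domination, $\aGraph_f$ is a global minimum of $\edgeDataMultiset{\cdot}$ over consistent supersets, hence a $\multisetCriteria$-preferred repair, and it has exactly $n = K$ value edges.

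The crux is the converse: when $\phi$ is unsatisfiable, \emph{every} preferred repair must have strictly more than $n$ value edges. Suppose some consistent superset $H$ had exactly $n$ value edges. Then $\aFormulab_1$ forces exactly one value edge per variable, i.e. a genuine assignment; since $\phi$ is unsatisfiable this assignment falsifies some clause under the original structure, so to satisfy $\aFormulab_2$ at that clause $H$ must have added at least one $\esLabel{appears\_in}$ or $\esLabel{appears\_negated\_in}$ edge. I would compare $H$ with the ``double'' repair $D$ that assigns every variable to both $\top$ and $\bot$ (so $2n$ value edges and no added \esLabel{appears} edges); $D$ is consistent because every clause contains a literal whose variable is now set both ways. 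Using the replacement characterisation of $\lessMultiset$ together with $\esLabel{value\_of} < \esLabel{appears\_in}, \esLabel{appears\_negated\_in}$, one removes from $H$ its added \esLabel{appears} edges (and any superfluous nodes) and adds the required value edges, each removal of a strictly larger element justifying the smaller ones added; this yields $D \lessMultiset H$, so $H$ is not preferred. Hence no preferred repair has $\le n$ value edges, completing the reduction.

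The main obstacle I expect is the \textsc{NP} upper bound. The argument used for $\Pi_\weight$ does \emph{not} transfer directly: there, weight is a total order, so ``some repair of weight $\le K$'' already implies ``some preferred (minimum-weight) repair of weight $\le K$'', and a bare consistent superset certifies membership. Here $\lessMultiset$ is only partial, and indeed in the unsatisfiable case above there exist repairs with $\le K$ value edges that are \emph{not} preferred, so a certificate must actually witness $\lessMultiset$-minimality. I would first note (via the standard-form bound of \cite[Theorem~24]{Abriola-etal-2023}) that all minimal consistent supersets have size $poly(|\aGraph|+|\aRestrictionSet|)$, and that preferred repairs coincide with the $\lessMultiset$-minimal consistent supersets. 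The delicate point is certifying that a guessed repair $\aGraph'$ is minimal; I would exploit monotonicity of the positive constraints and the rigidity of the standard form to bound the search for an improving superset, so that a polynomial certificate of minimality, together with $\aGraph' \models \aRestrictionSet$, $\aGraph \subseteq \aGraph'$ and the $\aLabel$-count bound, suffices. Pinning down this minimality certificate is exactly where the real work lies.
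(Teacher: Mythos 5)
Your reduction is exactly the paper's: the same gadget as in Theorem~\ref{teo:supset-weight-hard}, with $\aLabel=\esLabel{value\_of}$, $K=n$, and a well-order placing $\esLabel{value\_of}$ below the two $\esLabel{appears}$ labels (the paper fixes the total order $\esLabel{value\_of} < \esLabel{appears\_in} < \esLabel{appears\_negated\_in} < clause < var < \top < \bot$). You in fact do more than the paper, whose proof is a one-line ``analogous'' remark: you correctly observe that, unlike in the weight case, ``some consistent superset with $\le n$ value edges exists'' is \emph{not} equivalent to ``a preferred repair with $\le n$ value edges exists'', and you close the unsatisfiable direction by comparing any such $H$ against the double-assignment superset $D$. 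One small fix there: $D$ need not itself be a $\supseteq$-repair, since it may fail minimality; pass instead to a repair $D'$ with $\aGraph \subseteq D' \subseteq D$. As $\edgeDataMultiset{D'}$ is componentwise below $\edgeDataMultiset{D}$ and $D'$ still adds no nodes and no $\esLabel{appears}$ edges while $H$ adds at least one, $D' \lessMultiset H$ still follows, so $H$ is not $\multisetCriteria$-preferred.

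The genuine gap is the one you name and then leave open: membership in \textsc{NP}. As you note, a bounded-size consistent superset with at most $K$ edges labelled $\aLabel$ certifies nothing here (your own analysis of the unsatisfiable case produces such supersets that are not preferred), so a certificate must additionally witness $\lessMultiset$-minimality, and you do not supply one; ``exploit monotonicity and the rigidity of the standard form'' is a direction, not an argument. For the fixed $\aRestrictionSet=\{\aFormulab_1,\aFormulab_2\}$ and the fixed total order this can be completed: totality of $<$ on $\Sigma_e \sqcup \Sigma_n$ makes $\lessMultiset$ total, so all preferred repairs share a single multiset $M^*$, and one can argue that the coordinates of $M^*$ strictly above $\esLabel{value\_of}$ (added nodes per data value, added $\esLabel{appears}$ edges) are computable in polynomial time --- for instance, an $\esLabel{appears}$ edge must be added exactly for those $clause$-nodes with no incoming $\esLabel{appears}$ edge at all, since any existing one can be ``activated'' by adding $\esLabel{value\_of}$ edges, which are cheaper in the order. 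Then ``$M^*(\esLabel{value\_of})\le K$'' becomes an existential statement over polynomial-size supersets agreeing with those precomputed coordinates, which is in \textsc{NP}. Absent some such argument your proof establishes only \textsc{NP}-hardness --- which, in fairness, is all the paper uses in the sequel, and its own proof is no more explicit on this point than yours.
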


\begin{proof}
    The proof is analogous to the one of Theorem~\ref{teo:supset-weight-hard}, considering the order $\esLabel{value\_of} < \esLabel{appears\_in} < \esLabel{appears\_negated\_in} < clause < var < \top < \bot$ and $K=n$. \qed
\end{proof}

It follows that the hardness of $\Pi_\multisetCriteria$ implies the hardness of computing $\multisetCriteria$-preferred $\supseteq$-repairs for fixed sets of $\Gposregxpathnode$ expressions.
 
\begin{remark}
In the case of $\supseteq$-repairs, it was proved in \cite{Abriola-etal-2023} that deciding if there exists at least one repair is an undecidable problem for a fixed set of $\Gregxpath$ expressions, and \textsc{NP-hard} if $\Gposregxpath \subseteq \mathcal{L}$. Meanwhile, it was shown that if $\mathcal{L}\subseteq \Gposregxpath$ and the restriction set is fixed, or rather if $\mathcal{L} \subseteq \Gposregxpathnode$, then there are polynomial-time algorithms to compute a superset repair. Theorems \ref{teo:supset-weight-hard} and \ref{teo:supset-multiset-hard} show that these tractable cases become intractable when considering preferred repairs.
\end{remark}

\section{Conclusions}\label{Section:Conclusions}
In this work, we analyze preferred repairing for data-graphs. We specifically focus on the problem of deciding if a data-graph $\aGraph$ has a non-trivial preferred repair under two different data-aware preference criteria, one based on weights and the other based on multiset orderings. We showed that in some cases, these criteria do not make the repair decision problem harder than the version lacking preferences. 

Some questions in this context remain open, such as that of finding refined tractable versions of the problem that might be based on real-world applications. Moreover, the precise complexity class for superset repair is still unknown, since we only show a lower bound for this problem. Alternative definitions of types of repairs for data-graphs~\cite{barcelo2017data}, like those based on symmetric-difference~\cite{tenCate:2012}, are worth studying in the preference-based setting. 
Finally, it would be interesting to study more general families of criteria such as the ones proposed in \cite{staworko2012prioritized}, and analyze whether the complexity of the decision problems changes for data-graphs.

\bibliographystyle{abbrv}
\bibliography{biblio}

%
%
%
%

\end{document}